\declaretheorem{theorem}
\declaretheorem[sibling=theorem]{lemma}
\declaretheorem[sibling=theorem]{fact}
\newcommand{\E}{\mathbf{E}}
\newcommand{\Sref}[1]{\hyperref[#1]{\S\ref*{#1}}}
\title{Edge-weighted Online Stochastic Matching: Beating $1-\frac1e$}
\author{
    Shuyi Yan
    \thanks{Department of Computer Science, University of Copenhagen. Supported by VILLUM Investigator Grant 16582, Basic Algorithms Research Copenhagen (BARC). Most of this work was done when the author was affiliated with Tsinghua University.}
}
\begin{document}
    
\begin{titlepage}
    \thispagestyle{empty}
    \maketitle
    \begin{abstract}
        \thispagestyle{empty}
        We study the edge-weighted online stochastic matching problem. Since \citet*{feldman2009online} introduced the online stochastic matching problem and proposed the $(1-\frac1e)$-competitive Suggested Matching algorithm, there has been no improvement in the edge-weighted setting. In this paper, we introduce the first algorithm beating the $1-\frac1e$ barrier in this setting, achieving a competitive ratio of $0.645$. Under the LP proposed by \citet*{jaillet2014online}, we design an algorithmic preprocessing, dividing all edges into two classes. Then we use different matching strategies to improve the performance on edges in one class in the early stage and on edges in another class in the late stage, while keeping the matching events of different edges highly independent. By balancing them, we finally guarantee the matched probability of every single edge.
    \end{abstract}
\end{titlepage}

\section{Introduction}

Since the online bipartite matching problem was introduced by \citet*{karp1990optimal}, it has been playing an important role in the field of online algorithms. One of the most important applications of online matching is online advertising. When a user searches on a search engine, appropriate advertisements should be selected to show. In this problem, advertisers are modeled as \emph{offline vertices} which are known to the algorithm at the beginning, and impressions (user searches) are modeled as \emph{online vertices} which arrive one by one. The edges between them indicate whether (or how much, when the edges have weights) advertisers are interested in impressions. When an online vertex arrives, its incident edges are revealed and the online algorithm should immediately decide how to match it.

\citet{karp1990optimal} proposed the \emph{worst case model}, which assumes the algorithm has no information about future online vertices, and measures its performance in the worst instance. Specifically, they considered the \emph{competitive ratio} of the algorithm, which is the ratio of the expected size (or expected total weight) of the matching by the algorithm to that of the offline optimal matching. They introduced the Ranking algorithm, achieving a competitive ratio of $1-\frac1e \simeq 0.632$ for unweighted matching, and proved its optimality.

To beat the $1-\frac1e$ bound, some new models are introduced, including the \emph{online stochastic matching} model proposed by \citet*{feldman2009online}. In online stochastic matching, we assume the algorithm has some prior knowledge about online vertices. In particular, a bipartite graph of offline vertices and \emph{online vertex types} is given at the beginning, along with a distribution over the online vertex types. When arriving, each online vertex independently samples its type from that distribution.
They introduced the ($1-\frac1e$)-competitive \emph{Suggested Matching}\footnote{They stated Suggested Matching and its related LP in the unweighted setting originally, but it's straightforward to generalize it to the edge-weighted setting (see Section \ref{sec:preliminaries} for more details).} algorithm as a benchmark, and proposed the first algorithm beating the $1-\frac1e$ bound in the \emph{unweighted} setting, under the further assumption of \emph{integral arrival rates}\footnote{The expected number of online vertices of each online vertex type is an integer.}, which were removed in some later works \cite{manshadi2012online, jaillet2014online, huang2021online, huang2022power}.
To generalize the results to weighted matching, \citet*{huang2021online} considered that the online vertices follow a \emph{Poisson process} to simplify the analysis. They proved that it is asymptotically equivalent\footnote{They are asymptotically equivalent as the number of online vertices goes to infinity, regardless of the type distribution, etc. Prior works in the original model also assumed a large number of online vertices.} to the original model where the number of online vertices is fixed.
The results are then generalized to \emph{vertex-weighted} matching \cite{huang2021online, huang2022power} and edge-weighted matching \emph{with free disposal}\footnote{An offline vertex can be matched multiple times, but only the heaviest edge remains in the end.} \cite{huang2022power}. On the other hand, under the integral assumption, the $1-\frac1e$ ratio was also surpassed in the edge-weighted setting \cite{haeupler2011online, brubach2020online}.

However, these algorithms only work for the special cases of edge-weighted matching, as they utilize some special properties of these models, which do not hold in the general edge-weighted setting. For example, in the vertex-weighted setting, it suffices to lowerbound the matched probability of each vertex to achieve a good competitive ratio. For the edge-weighted matching with free disposal, since matching an edge will not block the incident offline vertex and prevent it from matching with a heavier neighbor in the future, it's always a positive event to match an edge. It still suffices to lowerbound the expectation of the heaviest matched edge incident to each offline vertex. For the edge-weighted matching under the integral assumption, since each arrival rate is at least $1$, there is a nonnegligible probability that the same online vertex arrives twice (or more). So the offline optimum needs to match an online vertex to (at least) two offline neighbors with some probability to get a good performance, which means the online algorithm naturally has (at least) two matching candidates.

In the general edge-weighted setting, without the help of these useful properties, there were no positive results beyond Suggested Matching and its $1-\frac1e$ benchmark. Moreover, an upper bound of $0.703$ \cite{huang2022power} indicates that this setting is essentially harder than those special cases, as previous works achieved competitive ratios of $0.716$ \cite{huang2022power}, $0.706$ \cite{huang2022power} and $0.705$ \cite{brubach2020online} for vertex-weighted matching, edge-weighted matching with free disposal and edge-weighted matching under the integral assumption, respectively.

\subsection{Our Contributions and New Techniques}

In this paper, we propose a (polynomial-time) $0.645$-competitive algorithm for the edge-weighted online stochastic matching, which is the first algorithm beating the $1-\frac1e$ bound in this setting. Following the same framework of previous papers, we introduce our algorithm under Poisson arrivals, and we also solve an LP at the beginning to bound the offline optimum and guide our online algorithm. Below we briefly introduce our new techniques.

\paragraph{Preprocessing Under the Jaillet-Lu LP}

We use the LP proposed by \citet*{jaillet2014online}. Under this LP, we can preprocess the solution to make it satisfy some stronger constraints. We show that under the Jaillet-Lu LP,  without loss of generality, an online algorithm can assume that there are only two classes of online vertex types: A first-class online vertex type is matched with only one offline vertex (in the LP solution, which is a fractional matching) and a second-class one is matched with two offline vertices half to half.

\paragraph{Multistage Suggested Matching}

We take the Suggested Matching algorithm proposed by \citet{feldman2009online} as the starting point. After dividing edges into two classes, we optimize them separately. In Suggested Matching, the match strategy of first-class edges itself cannot be further improved, while that of second-class edges can. So on the one hand, we improve the strategy of second-class edges in the late stage to avoid affecting first-class edges too much, and on the other hand, we don't match second-class edges in a short period at the beginning to increase the match probabilities of first-class edges. Balancing the durations of these two operations, our algorithm achieves a competitive ratio over $1-\frac1e$ on every single edge.

\paragraph{Independent Analysis for Every Single Edge}

Previous analyses for vertex-weighted matching \cite{huang2021online,huang2022power} upperbound the unmatched probability of each offline vertex to calculate the competitive ratio. However, for edge-weighted matching, this cannot reflect the matched probability of each edge. On the other hand, the analysis for edge-weighted matching with free disposal \cite{huang2022power} pays attention to the overall matching progress, but it cannot handle the loss when a light edge occupies an offline vertex and then blocks a heavy edge. In contrast, our algorithm independently lowerbounds the unmatched probability of each offline vertex at any time, so that we can independently calculate an edge's matched probability given its match rate.

\subsection{Related Work}

In the worst case model, \citet{karp1990optimal} introduced the $(1-\frac1e)$-competitive Ranking algorithm, which was then generalized to the vertex-weighted setting by \citet{aggarwal2011online}, with the same ratio. For edge-weighted matching with free disposal, \citet{fahrbach2020edge} proposed a $0.5086$-competitive algorithm, then \citet{gao2022improved} and \citet{blanc2022multiway} independently improved the ratio to $0.519$ and $0.5239$ respectively.

In the random order model, where we assume the online vertices arrive in a random order after the adversary decides the graph, \citet{goel2008online} proved that the Greedy algorithm is $(1-\frac1e)$-competitive, and gave an upper bound of $\frac56$. \citet{mahdian2011online} improved the ratio to $0.696$. \citet{huang2019online} generalized it to vertex-weighted matching with a competitive ratio of $0.653$, and \citet*{jin2021improved} improved it to $0.662$.

In the online stochastic matching model, \citet{feldman2009online} proposed a $0.67$-competitive algorithm for unweighted matching under the integral assumption, and an upper bound of $0.989$. A sequence of works improved the competitive ratio to $0.716$ for vertex-weighted matching without the assumption \cite{bahmani2010improved,manshadi2012online,jaillet2014online,huang2021online,huang2022power,tang2022fractional}, and the upper bound was also improved to $0.823$ \cite{bahmani2010improved,manshadi2012online}.
\citet{huang2022power} proposed a $0.706$-competitive algorithm for edge-weighted matching with free disposal and proved an upper bound of $0.703$ for edge-weighted matching without free disposal. \citet{tang2022fractional} first considered the non i.i.d. arrivals and achieved a competitive ratio of $0.666$.

On the other hand, under the integral assumption, the competitive ratio for unweighted matching was improved to $0.7299$ \cite{manshadi2012online,jaillet2014online,brubach2020online}. \citet{jaillet2014online} achieved a competitive ratio of $0.725$ in the vertex-weighted setting. \citet{haeupler2011online} achieved a competitive ratio of $0.667$ in the edge-weighted setting, which was then improved to $0.705$ by \citet{brubach2020online}.

\section{Preliminaries}
\label{sec:preliminaries}

\paragraph{Edge-weighted Online Stochastic Matching}

Consider a set $I$ of online vertex types and a set $J$ of offline vertices. Let $J_i\subseteq J$ denote the set of offline vertices adjacent to online type $i\in I$. Let $E=\{(i,j):i\in I,j\in J_i\}$ denote the set of edges, where each edge $(i,j)$ has a non-negative weight $w_{ij}$. Each online vertex type $i\in I$ has an arrival rate $\lambda_i$, which is the expected number of online vertices of this type. $\Lambda=\sum_{i\in I}\lambda_i$ online vertices arrive one by one. Each online vertex draws its type $i$ with probability $\frac{\lambda_i}{\Lambda}$ independently. The objective is to maximize the expected total weight of all matched edges.

\paragraph{Poisson Arrivals}

Like previous works, we do not assume that the number of online vertices is fixed. Instead, we assume online vertices arrive following a Poisson process with an arrival rate $\Lambda$ in the time horizon $0 \le t \le 1$. Equivalently, the online vertices of each type $i$ arrive independently following a Poisson process with arrival rate $\lambda_i$. As we mentioned before, \citet{huang2021online} have proved it's asymptotically equivalent to the original stochastic model. We remark that our analysis would also work in the original model if we assume a large number of online vertices, which is also assumed in prior works in the original model.

\paragraph{Online Algorithms}

When an online vertex arrives, an online algorithm should immediately and irrevocably match it to an unmatched offline neighbor, or discard it. Define the competitive ratio of an online algorithm to be the infimum (over all possible instances) of the ratio of its expected objective to the expected objective of the offline optimal matching.

\paragraph{Suggested Matching}

Here we briefly restate the Suggested Matching algorithm proposed by \citet{feldman2009online}, in the edge-weighted setting. We first calculate the optimal solution of the following LP, which is a fractional matching upperbounding the offline optimum.
\begin{alignat}{2}
    \mbox{maximize}\quad & \sum_{(i,j)\in E} w_{ij} x_{ij} & {} & \nonumber \\
    \mbox{subject to}\quad & \sum_{j \in J} x_{ij}\leq\lambda_i &\quad& \forall i\in I \nonumber \\
    & \sum_{i\in I} x_{ij} \leq 1 &\quad& \forall j\in J \nonumber \\
    & x_{ij} \geq 0 &\quad& \forall(i,j)\in E \nonumber
\end{alignat}

Then, when an online vertex of type $i$ arrives, we match it to each unmatched adjacent offline vertex $j$ with probability $\frac{x_{ij}}{\lambda_i}$. In the end, the matched probability of each edge $(i,j)$ will be at least $(1-\frac{1}{e})x_{ij}$.

\paragraph{Jaillet-Lu LP}

We will use the LP proposed by \citet{jaillet2014online} instead of the above LP. We restate the Jaillet-Lu LP as follows:
\begin{alignat}{2}
    \mbox{maximize}\quad & \sum_{(i,j)\in E} w_{ij} x_{ij} & {} & \nonumber \\
    \mbox{subject to}\quad & \sum_{j \in J} x_{ij}\leq\lambda_i &\quad& \forall i\in I \nonumber \\
    & \sum_{i\in I} x_{ij} \leq 1 &\quad& \forall j\in J \nonumber \\
    & \sum_{i\in I} (2x_{ij}-\lambda_i)^+ \leq 1-\ln2 &\quad& \forall j\in J \label{eqn:yj-0} \\
    & x_{ij} \geq 0 &\quad& \forall(i,j)\in E \nonumber
\end{alignat}

The only difference between these two LPs is Constraint \eqref{eqn:yj-0}, which limits the situation that an online vertex type puts most of its arrival rate on one incident edge\footnote{Intuitively, it's because when the arrival rate of an online vertex type (or a set of types) becomes larger, the proportion of its arrival rate it can put on one edge becomes smaller.}. Although this LP was initially proposed for the original online stochastic matching model, \citet{huang2021online} have shown that it also bounds the offline optimum in the Poisson arrival model.
For convenience, we artificially define $x_{ij}=0$ for any $(i,j)\notin E$. We further define $x_i=\sum_{j\in J}x_{ij}$ and $x_j=\sum_{i\in I}x_{ij}$. We will also call $x_{ij}$ the \emph{flow} of edge $(i,j)$ or the \emph{flow} from $i$ to $j$.

\section{Preprocessing Under the Jaillet-Lu LP}

When we use the Jaillet-Lu LP to bound the offline optimum, we can preprocess the solution to make it satisfy some new constraints, while the objective remains unchanged and all original constraints still hold. This on the one hand helps in designing our algorithm, and on the other hand characterizes the worst instances an online algorithm needs to consider.

For each online vertex type $i$ such that $x_i<\lambda_i$, we add $m=\max\{\lceil\lambda_i-x_i\rceil,2\}$ dummy\footnote{For a dummy vertex, the weights of its incident edges are $0$, and its initial flow is $0$.} offline vertices connecting to $i$, each with a flow $\frac{\lambda_i-x_i}{m}$. The new dummy vertices satisfy Constraint \eqref{eqn:yj-0} since $m\ge 2$. It's straightforward to check that other constraints of the Jaillet-Lu LP still hold and we have:
\begin{equation}
\label{eqn:xi}
	\forall i\in I, \quad x_i=\lambda_i.
\end{equation}

Then we add two more dummy offline vertices, and one dummy online vertex type with a flow $1-x_j$ to every offline vertex $j$ (including the two dummy offline vertices), which has an appropriate arrival rate satisfying Equation \eqref{eqn:xi}. Since the arrival rate of this new vertex is at least $2$, it won't contribute to Constraint \eqref{eqn:yj-0} for any offline vertex. After that, all constraints of the Jaillet-Lu LP and Equation \eqref{eqn:xi} still hold, and we have:
\begin{equation}
\label{eqn:xj}
	\forall j\in J, \quad x_j=1.
\end{equation}

The final step is more complicated. We want to make the solution satisfy:
\begin{equation}
\label{eqn:edge}
	\forall (i,j)\in E, \quad x_{ij}\in\Big\{0,\frac12\lambda_i,\lambda_i\Big\},
\end{equation}
which means there are only two classes of online vertex types: one with an adjacent offline vertex such that $x_{ij}=\lambda_i$, and the other with two adjacent offline vertices such that $x_{ij_1}=x_{ij_2}=\frac12\lambda_i$. We will achieve this by finding a way to split each online vertex type $i$ to several types $i_1,\dots,i_k$, such that $\sum_{u=1}^k x_{i_u j}=x_{ij}$ for any offline vertex $j$. By assigning $\lambda_{i_u}=\sum_{j\in J}x_{i_u j}$, any split scheme trivially satisfies Equations \eqref{eqn:xi} and \eqref{eqn:xj}. So it suffices to find out how to pair up the edges to satisfy Equation \eqref{eqn:edge} while maintaining Constraint \eqref{eqn:yj-0}.
For example, an online vertex type with $2k$ equally weighted neighbors can be split into $k$ types, each with $2$ of the neighbors.

Informally speaking, any method that avoids pairing an edge with itself as much as possible will work. However, to give a formal proof, here we propose a specific split scheme.

\begin{lemma}
    For every online vertex type $i$, there is a split scheme satisfying Equation \eqref{eqn:edge} and Constraint \eqref{eqn:yj-0}.
\end{lemma}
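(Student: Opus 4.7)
The plan is to decompose each online type $i$ into sub-types that are either \emph{singletons} (a sub-type with a single edge carrying all of its flow, so $x_{i_u j}=\lambda_{i_u}$) or \emph{pairs} (a sub-type with two edges each carrying half of its flow), while preserving Constraint \eqref{eqn:yj-0} at every offline vertex. Equations \eqref{eqn:xi} and \eqref{eqn:xj} are automatic for any split by the choice $\lambda_{i_u}=\sum_j x_{i_u j}$, so the whole content is checking \eqref{eqn:yj-0}.

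First I would account for how each kind of sub-type contributes to the left-hand side of \eqref{eqn:yj-0}. A pair sub-type contributes $0$ at every offline vertex: on its two supported vertices $2x_{i_u j}-\lambda_{i_u}=0$, and elsewhere the quantity is nonpositive. A singleton $(j,s)$, which has $\lambda_{i_u}=s$, contributes exactly $s$ at $j$ and $0$ elsewhere. Thus to preserve \eqref{eqn:yj-0} it suffices that at each offline vertex $j$ the total singleton mass sent from $i$ to $j$ is at most $(2x_{ij}-\lambda_i)^+$.

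Next, I would exploit that $\sum_j x_{ij}=\lambda_i$ forces at most one neighbor $j_1$ of $i$ to satisfy $x_{ij_1}>\lambda_i/2$. If such a $j_1$ exists, I introduce a single singleton sub-type at $j_1$ of mass $s_1=2x_{ij_1}-\lambda_i$; this exactly matches the old contribution at $j_1$ and generates no singletons elsewhere. After peeling it off, the residual flows out of $i$ are $a_{j_1}=\lambda_i-x_{ij_1}$ and $a_{j_u}=x_{ij_u}$ for $u\ge 2$, summing to $2(\lambda_i-x_{ij_1})$. Every residual mass is at most half of this total: for $a_{j_1}$ this is an equality, while $a_{j_u}=x_{ij_u}\le \lambda_i-x_{ij_1}$ follows from $x_{ij_1}+x_{ij_u}\le x_i=\lambda_i$. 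In the easier case $x_{ij_1}\le\lambda_i/2$, no singleton is produced and the max-vs-half-total condition already holds.

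What remains is a purely combinatorial \emph{fractional pairing lemma}: nonnegative masses $a_1,\dots,a_n$ located at distinct vertices with $\max_u a_u\le \tfrac12\sum_u a_u$ can be written as a sum of pair sub-types of equal mass supported on two distinct vertices. I would prove this by a greedy induction, repeatedly pairing the two largest masses $a_1\ge a_2$ with weight $a_2$, which zeroes out $a_2$ and replaces $a_1$ by $a_1-a_2$; a short calculation shows the max-vs-half-total condition is preserved after each step, so the recursion terminates with all mass decomposed. I expect this pairing lemma to be the only nontrivial ingredient, and it is standard; the rest of the argument is bookkeeping once the contribution of each sub-type to \eqref{eqn:yj-0} is correctly identified.
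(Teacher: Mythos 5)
Your reduction to a ``fractional pairing lemma'' is a sound and tidy way to organize the proof, and the preliminary bookkeeping is correct: a pair sub-type contributes $0$ to the left-hand side of Constraint~\eqref{eqn:yj-0} at every vertex, a singleton of mass $s$ contributes exactly $s$ at its vertex, at most one neighbor $j_1$ of $i$ can have $x_{ij_1}>\lambda_i/2$, and after peeling a singleton of mass $2x_{ij_1}-\lambda_i$ off that neighbor the residual masses do satisfy $\max_u a_u\le\tfrac12\sum_u a_u$. However, the greedy you use to prove the pairing lemma is incorrect: pairing the two largest masses $a_1\ge a_2$ with weight $a_2$ does \emph{not} preserve the max-vs-half-total invariant. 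Take $a=(3,3,3)$ (e.g.\ $\lambda_i=3$ with three neighbors of flow $1$ each, so no singleton is peeled). One step yields $(0,0,3)$, where the max is $3$ but half the total is $1.5$; the remaining mass $3$ is stranded, and any pairing rule that zeroes out two of the three equal coordinates in a single step fails the same way. The only valid decomposition of $(3,3,3)$ puts weight $1.5$ on each of the three pairs, which your greedy cannot produce.

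The pairing lemma itself is true, and the fix is local: take the pairing weight to be $w=\min\bigl(a_2,\ \tfrac12\sum_u a_u - a_3\bigr)$, where $a_3$ is the third-largest mass (taken to be $0$ if fewer than three masses remain). A short case check shows this preserves the invariant, is strictly positive whenever mass remains, and at every step either zeroes out a coordinate or makes the invariant tight (after which it stays tight and each further step zeroes out a coordinate), so the induction terminates. The paper avoids the induction altogether: it lays out the flows $x_{ij_1},\dots,x_{ij_k}$ as consecutive subintervals of $[0,\lambda_i)$ and pairs position $\theta$ with position $\theta+\tfrac{\lambda_i}{2}$, a ``rotation by half'' that produces the fractional pairing in closed form and automatically yields a collapsed pair (i.e.\ a singleton) of mass exactly $(2x_{ij}-\lambda_i)^+$ at any vertex with $x_{ij}>\lambda_i/2$. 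Your peeling-then-pairing structure is conceptually equivalent to this once the greedy is repaired, but as written the proof of the key lemma has a genuine gap.
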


\begin{proof}
    Consider an online vertex type $i$ with $k$ adjacent offline vertices $j_1$, \dots, $j_k$. Define a function $f_i:[0,\lambda_i)\to J_i$ such that $f_i(\theta)=j_u$ if and only if $\sum_{v=1}^{u-1}x_{ij_v}\le\theta<\sum_{v=1}^{u}x_{ij_v}$, i.e. each edge $(i,j_u)$ corresponds to an interval of length $x_{ij_u}$. Then we will pair $f_i(\theta)$ with $f_i(\theta+\frac{\lambda_i}{2})$.
    
    The interval $[0,\frac{\lambda_i}{2})$ can be divided into at most $2k$ subintervals, in which $f_i(\theta)$ and $f_i(\theta+\frac{\lambda_i}{2})$ are both invariant. For each subinterval $[l,r)$, create an online vertex type with a flow $r-l$ to each of $f_i(l)$ and $f_i(l+\frac{\lambda_i}{2})$. If $f_i(l)=f_i(l+\frac{\lambda_i}{2})$, they collapse to one edge with flow $2(r-l)$, and only in this case it contributes to Constraint \eqref{eqn:yj-0}.
    
    It's easy to see these online vertex types sum up to the original type $i$ and satisfy Equation \eqref{eqn:edge}. For each edge $(i,j_u)$ with $x_{ij_u}\le \frac{\lambda_i}{2}$, obviously there is no $\theta\in[0,\frac{\lambda_i}{2})$ such that $f_i(\theta)=f_i(\theta+\frac{\lambda_i}{2})=j_u$. If there is an edge $(i,j_u)$ with $x_{ij_u}> \frac{\lambda_i}{2}$, there will be only one subinterval $[l,r)$ such that $f_i(\theta)=f_i(\theta+\frac{\lambda_i}{2})=j_u$ for $\theta\in[l,r)$ with $r-l=x_{ij_u}-\frac{\lambda_i}{2}$. So Constraint \eqref{eqn:yj-0} still holds.
\end{proof}

We remark that, for an online algorithm, the split can be simulated as follows. When an online vertex of (original) type $i$ arrives, we change its type to $i_u$ with probability $\frac{\lambda_{i_u}}{\lambda_i}$.

In summary, for any solution of the Jaillet-Lu LP, we can preprocess it so that the constraints other than \eqref{eqn:yj-0} can be replaced by their tighter versions, Equations \eqref{eqn:xi}, \eqref{eqn:xj} and \eqref{eqn:edge}, so that the solution will have a more discrete structure. We summarize the conclusion in the following theorem.

\begin{theorem}
\label{thm:sol}
    For any solution of the Jaillet-Lu LP, we can transform it into an equivalent fractional matching (on another equivalent instance) with the same total weight, which satisfies Constraint \eqref{eqn:yj-0} and Equations \eqref{eqn:xi}, \eqref{eqn:xj} and \eqref{eqn:edge}.
\end{theorem}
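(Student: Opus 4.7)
The plan is to build the desired equivalent instance and fractional matching by composing the three preprocessing operations sketched above, checking at each step that (i) the total objective value is preserved, (ii) Constraint \eqref{eqn:yj-0} continues to hold at every offline vertex, and (iii) the new equation we are targeting in that step is achieved while the ones already obtained are not broken.

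First I would perform the dummy-offline-vertex augmentation: for each online type $i$ with $x_i<\lambda_i$, attach $m=\max\{\lceil\lambda_i-x_i\rceil,2\}$ zero-weight offline vertices, each receiving flow $\frac{\lambda_i-x_i}{m}$ from $i$. Because $m\ge 2$, we have $2\cdot\frac{\lambda_i-x_i}{m}\le\lambda_i-x_i\le\lambda_i$, so these new offline vertices contribute nothing to \eqref{eqn:yj-0}, and the original vertices are untouched; this step enforces Equation \eqref{eqn:xi}. Next I would adjoin two more zero-weight offline vertices together with one dummy online type sending flow $1-x_j$ to every offline vertex $j$, setting its arrival rate equal to the sum of these flows. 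Because there are now at least two offline vertices with $x_j=0$, this arrival rate is at least $2$, so the dummy type's contribution to \eqref{eqn:yj-0} at every $j$ is zero. This step yields Equation \eqref{eqn:xj} while keeping \eqref{eqn:xi}. Both steps only introduce zero-weight edges, so the objective is unchanged.

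Finally I would invoke the preceding lemma to split each remaining online type into subtypes whose edges satisfy Equation \eqref{eqn:edge}. The lemma already guarantees that \eqref{eqn:yj-0} survives the split. Since we assign $\lambda_{i_u}=\sum_j x_{i_u j}$, Equation \eqref{eqn:xi} persists; the column sums at every offline vertex are unchanged by the split, so Equation \eqref{eqn:xj} also persists; and since each subtype inherits the edge weights of its parent type, the objective is preserved as well. Composing the three transformations produces the equivalent instance and matching claimed in the theorem. The only part requiring real care is the bookkeeping around \eqref{eqn:yj-0}, which is the lone constraint coupling multiple online types at a single offline vertex through its nonlinear positive-part structure; fortunately, by construction the first two steps contribute zero to its left-hand side, and the lemma handles the third.
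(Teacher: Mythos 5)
Your proposal reproduces the paper's own argument: the same three preprocessing steps in the same order (dummy offline vertices for \eqref{eqn:xi}, two extra dummy offline vertices plus a dummy online type for \eqref{eqn:xj}, then the split lemma for \eqref{eqn:edge}), with the same justification that the dummy online type has arrival rate at least $2$ so it cannot contribute to \eqref{eqn:yj-0}. The bookkeeping you add about \eqref{eqn:xi} and \eqref{eqn:xj} surviving the split is correct and matches the paper's remark that any split with $\lambda_{i_u}=\sum_j x_{i_u j}$ trivially preserves both.
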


We remark that the preprocessing can be done in polynomial time and it is independent of the online algorithm.

\section{Multistage Suggested Matching}

\subsection{Algorithm}

By Theorem \ref{thm:sol}, we have a fractional matching satisfying Constraint \eqref{eqn:yj-0} and Equations \eqref{eqn:xi}, \eqref{eqn:xj} and \eqref{eqn:edge}. Naturally, we ignore the edges with $x_{ij}=0$, so we say $i$ and $j$ are \emph{neighbors} if and only if $x_{ij}>0$. We call edge $(i,j)$ a \emph{first-class edge} if $x_{ij}=\lambda_i$, or a \emph{second-class edge} if $x_{ij}=\lambda_i/2$. If an online vertex/type has an incident first-class (respectively second-class) edge, we call it a \emph{first-class} (respectively \emph{second-class}) \emph{online vertex/type}. Let $I_1$ (respectively $I_2$) denote the set of first-class (respectively second-class) online vertex types. Define $y_j=\sum_{i\in I_1}x_{ij}$ to be the total flow of first-class edges incident to offline vertex $j$, then we can rewrite Constraint \eqref{eqn:yj-0} as:
\begin{equation}
\label{eqn:yj}
	\forall j\in J, \quad y_j\leq 1-\ln2.
\end{equation}

In the Suggested Matching algorithm, we match every first-class vertex to its only neighbor with probability $1$, while for a second-class vertex, we only match it to a neighbor with probability $\frac12$ even if the other neighbor has been matched. We try to increase the match rate in this case. However, if we do it with no limitations, such as raising the probability to $1$ when the other neighbor is matched, the matched probabilities of different offline vertices and edges will become highly correlated. A lighter edge may occupy its incident offline vertex with a higher probability so that a heavier edge will be more likely unmatched. In particular, it will hinder us from establishing a lower bound of the unmatched probability of an offline vertex at some point and then independently calculating the matched probability of each edge. Thus our algorithm only observes the status of offline vertices once, at some time point $t_1$. Only when a neighbor of some second-class vertex is matched at that time, we increase the match rate of the other neighbor in the remaining time.

The above method improves the performance on second-class edges, but the matched probabilities of first-class edges are affected. Since the match strategy of first-class vertices has no room to improve, we need to find another way. We sacrifice the match probabilities of second-class edges in early time to increase the unmatched probabilities of offline vertices. Before some time point $t_0$, our algorithm directly discards all arriving second-class vertices. Since the proportion of first-class edges is low (Eqn. \eqref{eqn:yj}), their matched probability will be significantly improved. On the contrary, the previous method slightly affects first-class edges since it only happens in the late stage. Combining these two methods, our algorithm makes the competitive ratio over every single edge exceed $1-\frac1e$.

Therefore, our algorithm can be divided into three stages by time, and in each stage, our match strategy is similar to that in Suggested Matching, so we call it the Multistage Suggested Matching algorithm.

\begin{tcolorbox}[beforeafter skip=10pt]
    \textbf{Multistage Suggested Matching}\\[1ex]
    \emph{Input at the beginning:}
    \begin{itemize}[itemsep=0pt, topsep=4pt]
        \item Online vertex types $I$, offline vertices $J$, edges $E$;
        \item Arrival rates $(\lambda_i)_{i \in I}$;
        \item Fractional matching $(x_{ij})_{(i,j) \in E}$ that satisfies Equations \eqref{eqn:xi}, \eqref{eqn:xj}, \eqref{eqn:edge} and \eqref{eqn:yj};
        \item Boundary times $t_0,t_1$ such that $0\leq t_0\leq t_1\leq 1$.
    \end{itemize}
    \smallskip
    \emph{When a first-class online vertex of type $i \in I_1$ arrives at time $0 \le t \le 1$:}
    \begin{itemize}[itemsep=0pt, topsep=4pt]
        \item Match it to the only neighbor if the neighbor is unmatched.
    \end{itemize}
    \smallskip
    \emph{When a second-class online vertex of type $i \in I_2$ arrives at time $0 \le t \le t_0$:}
    \begin{itemize}[itemsep=0pt, topsep=4pt]
        \item Discard it.
    \end{itemize}
    \smallskip
    \emph{When a second-class online vertex of type $i \in I_2$ arrives at time $t_0 < t \le t_1$:}
    \begin{itemize}[itemsep=0pt, topsep=4pt]
        \item Match it to each unmatched neighbor with probability $\frac12$.
    \end{itemize}
    \smallskip
    \emph{When a second-class online vertex of type $i \in I_2$ arrives at time $t_1 < t \le 1$:}
    \begin{itemize}[itemsep=0pt, topsep=4pt]
        \item If only one neighbor was unmatched at time $t_1$, match it to the neighbor if the neighbor is still unmatched;
        \item Otherwise match it to each unmatched neighbor with probability $\frac12$.
    \end{itemize}
\end{tcolorbox}

\subsection{Analysis}

We can see from the algorithm that the strategy of each online vertex is almost fixed. Only second-class vertices will adjust the strategy in the final stage based on a one-time observation at time $t_1$. This means the match events of different offline vertices are highly independent. We formalize this by first defining for each edge $(i,j)$ the \emph{match rate} $r_{ij}(t)$ to be the rate of online vertex $i$ trying to match offline vertex $j$ at time $t$. Note that for any first-class edge $(i,j)$, or for any second-class edge $(i,j)$ before time $t_1$, $r_{ij}(t)$ is independent of any randomness before. By the definition of the algorithm, we have:

\begin{fact}
\label{fact:rate}
	For any first-class edge $(i,j)$, for any time $0\le t\le 1$, $r_{ij}(t)=\lambda_i=x_{ij}$.
	
	For any second-class edge $(i,j)$:
	
	(1) For any time $0\le t\le t_0$, $r_{ij}(t)=0$;
	
	(2) For any time $t_0< t\le t_1$, $r_{ij}(t)=\frac12\lambda_i=x_{ij}$.
\end{fact}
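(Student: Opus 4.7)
The plan is straightforward since \textbf{Fact \ref{fact:rate}} is essentially a direct consequence of the algorithm's specification together with Equations \eqref{eqn:xi} and \eqref{eqn:edge}. I would prove each clause by unpacking the definition of $r_{ij}(t)$ (the unconditional rate at which arrivals of type $i$ direct a match attempt at $j$) and then multiplying the Poisson arrival rate $\lambda_i$ of type $i$ by the per-arrival attempt probability prescribed by the algorithm box.

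For a first-class edge $(i,j)$, the only neighbor of $i$ is $j$: combining $x_{ij} = \lambda_i$ (from the first-class definition plus Equation \eqref{eqn:edge}) with $x_i = \lambda_i$ (Equation \eqref{eqn:xi}) forces $x_{ij'} = 0$ for every other $j'$. The algorithm therefore dispatches every arriving type-$i$ vertex toward $j$, and superimposing this deterministic rule on the rate-$\lambda_i$ Poisson arrivals gives $r_{ij}(t) = \lambda_i = x_{ij}$ for all $t \in [0,1]$.

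For a second-class edge $(i,j)$ in the initial stage $[0, t_0]$, every type-$i$ arrival is discarded before any neighbor is considered, so no attempt is ever directed at $j$ and $r_{ij}(t) = 0$. In the middle stage $(t_0, t_1]$, each type-$i$ arrival attempts $j$ with probability $\frac{1}{2}$, and this decision depends neither on the matched status of $j$ nor on any earlier randomness; the phrase ``match it to each unmatched neighbor with probability $\frac{1}{2}$'' is read as: sample which of the two neighbors to target (each with probability $\frac{1}{2}$) and carry out the match precisely when the targeted neighbor is still unmatched. The \emph{attempt} rate at $j$ is therefore $\lambda_i/2$, which equals $x_{ij}$ by the second-class case of Equation \eqref{eqn:edge}.

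There is essentially no obstacle to overcome; the only point worth flagging is the semantic one above, namely that ``match $j$ with probability $\frac{1}{2}$'' is parsed as an unconditional attempt whose success is gated by $j$ being unmatched. Once this is accepted, the match rate in each regime covered by the fact is a deterministic function of $t$ alone, which is precisely why the statement stops at $t_1$: after $t_1$ the algorithm branches on the snapshot of offline vertices observed at $t_1$, and the match rate ceases to be deterministic.
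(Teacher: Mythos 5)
Your proof is correct and takes essentially the same route as the paper, which states Fact \ref{fact:rate} with no separate argument beyond ``by the definition of the algorithm'' — you are simply making the immediate consequence explicit. Your careful parsing of ``match it to each unmatched neighbor with probability $\frac12$'' as an unconditional targeting choice whose success is gated by availability is the right reading, and it is the same convention the paper invokes when it later extends match attempts ``artificially'' past $t_1$.
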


For any second-class edge $(i,j)$ after time $t_1$, $r_{ij}(t)$ is only dependent on the status of the other neighbor of $i$ at time $t_1$. For simplicity, for a second-class edge $(i,j)$ where $j$ has been matched at time $t_1$, when online vertex $i$ arrives after time $t_1$, we artificially suppose it tries to match $j$ (and of course fails) with probability $\frac12$ (respectively $1$) if the other neighbor of $i$ is unmatched (respectively has been matched) at time $t_1$. Then we have:

\begin{fact}
\label{fact:rate2}
	For any second-class edge $(i,j)$, for any time $t_1< t\le 1$, letting $j'$ be the other neighbor of $i$:
	
	(1) If $j'$ is unmatched at time $t_1$, $r_{ij}(t)=\frac12\lambda_i=x_{ij}$;
	
	(2) If $j'$ has been matched at time $t_1$, $r_{ij}(t)=\lambda_i=2x_{ij}$.
\end{fact}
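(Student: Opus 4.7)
My plan is a direct case analysis on the joint statuses of $j$ and $j'$ at time $t_1$. Let $S_j, S_{j'} \in \{U, M\}$ indicate whether the respective offline vertex is unmatched or matched at $t_1$; there are four combinations, and for each I will compute $r_{ij}(t)$ by unfolding the third-stage rule of the Multistage Suggested Matching algorithm and, when $j$ has already been matched at $t_1$, invoking the artificial convention stated in the paragraph just before the fact.

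In the case $S_{j'}=U$, I will handle the two sub-cases for $S_j$ separately. When $S_j=U$ as well, both neighbors of $i$ were unmatched at $t_1$, so the ``otherwise'' branch fires and upon each arrival of $i$ the algorithm tosses an independent probability-$\tfrac12$ coin for $j$, yielding $r_{ij}(t)=\tfrac12\lambda_i$. When $S_j=M$, the artificial convention directly prescribes a (wasted) attempt on $j$ with probability $\tfrac12$, yielding the same $r_{ij}(t)=\tfrac12\lambda_i=x_{ij}$. This establishes part~(1).

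For the case $S_{j'}=M$, I again split on $S_j$. When $S_j=U$, then $j$ is the unique neighbor of $i$ unmatched at $t_1$, so the first branch fires and upon each arrival of $i$ the algorithm attempts to match $j$ with probability $1$, giving $r_{ij}(t)=\lambda_i$. When $S_j=M$, the artificial convention assigns a (wasted) attempt probability of $1$, again giving $r_{ij}(t)=\lambda_i=2x_{ij}$. This establishes part~(2).

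The only delicate point — really the whole reason the artificial convention is introduced — is that $r_{ij}(t)$ must be read as the Poisson rate of \emph{attempts} (the coin-flip outcomes) by $i$ targeting $j$, regardless of whether those attempts hit an already-matched $j$. This is the same reading used for first-class edges in Fact~\ref{fact:rate} and extended past $t_1$ by the artificial definition; under it, $r_{ij}(t)$ becomes a deterministic function of the statuses at time $t_1$ alone, exactly as the fact records. I do not anticipate any real obstacle beyond this bookkeeping.
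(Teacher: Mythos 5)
Your proof is correct and follows essentially the same (implicit) route as the paper: Fact~\ref{fact:rate2} is stated without an explicit proof because it is a direct unfolding of the third-stage rule of the algorithm together with the artificial convention, and your four-case analysis on $(S_j, S_{j'})$ is exactly that unfolding. Your closing remark correctly identifies the one point that needs care — that $r_{ij}(t)$ is the rate of \emph{attempts} (coin-flip outcomes) targeting $j$, deterministic given the $t_1$-snapshot, with success or failure accounted for separately via the factor $A_j(t')$ in Fact~\ref{fact:ratio}.
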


Using the match rate, we can bound the unmatched probability of each offline vertex at any time. For vertex-weighted matching, an upper bound of the unmatched probability of each offline vertex will be useful since it can directly contribute to the total weight. However, for edge-weighted matching, we need to compute the matched probability for each edge, so instead we lowerbound the unmatched probability of each offline vertex and then make use of the exact match rate of each edge.

For any offline vertex $j$, let $r_j(t)=\sum_{i\in I}r_{ij}(t)$ be the match rate of $j$ and $A_j(t)$ be the indicator of whether $j$ is unmatched at time $t$. By definition:

\begin{fact}
\label{fact:ratio}
	For any edge $(i,j)$ and any time $0\le t\le 1$, the probability that it has been matched at time $t$ is $\int_0^t \E[r_{ij}(t')A_j(t')] \mathrm{d}t'$.
\end{fact}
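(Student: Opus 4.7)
The plan is to express the probability that edge $(i,j)$ is matched by time $t$ as the expected number of successful match events on the edge over $[0,t]$. These two quantities coincide because the count is at most one: offline vertex $j$ can be matched at most once, so $(i,j)$ is matched at most once. It then suffices to compute the expected number of successful matches via an instantaneous-rate argument and show it equals $\int_0^t \E[r_{ij}(t')A_j(t')]\,\mathrm{d}t'$.

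Concretely, I would condition on the algorithm's history $\mathcal{F}_{t'}$ up to time $t'$ and argue that the conditional probability that $(i,j)$ becomes matched in the infinitesimal interval $[t', t'+\mathrm{d}t']$ is exactly $r_{ij}(t')\,A_j(t')\,\mathrm{d}t'$. There are two pieces. First, both $r_{ij}(t')$ and $A_j(t')$ are $\mathcal{F}_{t'}$-measurable: this is immediate for $A_j(t')$, and it follows from Facts \ref{fact:rate} and \ref{fact:rate2} for $r_{ij}(t')$, since the rate is either a deterministic constant (first-class edges, or second-class edges at times up to $t_1$) or is determined entirely by the state observed at time $t_1 \le t'$ (second-class edges after $t_1$). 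Second, by the definition of the match rate together with the Poisson arrival structure, the conditional probability of a type-$i$ arrival in $[t', t'+\mathrm{d}t']$ that the algorithm proposes to match to $j$ equals $r_{ij}(t')\,\mathrm{d}t'$, and such a proposal actually matches the edge iff $A_j(t')=1$. The artificial-attempt convention introduced just before Fact \ref{fact:rate2} is compatible with this picture: when that convention applies, $j$ is already matched at time $t_1$ and hence at time $t' \ge t_1$, so $A_j(t')=0$, and the artificial attempts contribute zero to the actual match event — they only inflate $r_{ij}$ to simplify its description.

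Taking unconditional expectation and integrating over $t'\in[0,t]$, the events ``edge $(i,j)$ is matched at a time in $[t', t'+\mathrm{d}t']$'' are mutually exclusive over disjoint infinitesimal intervals, so their probabilities add up to $\int_0^t \E[r_{ij}(t')A_j(t')]\,\mathrm{d}t'$. The main technical point requiring care is the formal passage from the infinitesimal heuristic to the integral identity, which I would justify via a standard Fubini (or point-process compensator) argument, relying on the $\mathcal{F}_{t'}$-measurability established above and the boundedness of $r_{ij}(t')A_j(t')$. I expect this measurability bookkeeping to be the only mildly delicate part; the remainder is a routine Poisson-process computation.
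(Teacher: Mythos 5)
Your argument is correct and is essentially a careful formalization of what the paper simply asserts ``by definition'': the paper gives no explicit proof of Fact~\ref{fact:ratio}, treating it as an immediate consequence of the match-rate definition and the Poisson arrival structure. Your treatment of the $\mathcal{F}_{t'}$-measurability of $r_{ij}(t')$ and the observation that the artificial attempts in Fact~\ref{fact:rate2} contribute nothing because $A_j(t')=0$ there are both correct and make explicit the points the paper leaves implicit.
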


\begin{fact}
\label{fact:prob}
	For any offline vertex $j$ and any time $0\le t\le 1$, $\E[A_j(t)]=1-\int_0^t \E[r_j(t')A_j(t')] \mathrm{d}t'$.
\end{fact}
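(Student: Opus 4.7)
The plan is to view $A_j(\cdot)$ as a $\{0,1\}$-valued, non-increasing process that starts at $1$ and can jump down to $0$ at most once, namely at the first successful matching attempt on $j$. Fact \ref{fact:prob} then amounts to identifying the instantaneous rate at which this jump occurs and integrating.

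First I would set up a filtration $\mathcal{F}_t$ encoding all arrivals up to time $t$ together with the algorithm's random matching decisions. By Facts \ref{fact:rate} and \ref{fact:rate2}, every $r_{ij}(t)$ is $\mathcal{F}_t$-measurable: it is deterministic for first-class edges and for second-class edges up to $t_1$, and for $t>t_1$ on a second-class edge it depends only on the status at time $t_1$ of the other neighbor of $i$, which belongs to $\mathcal{F}_{t_1}\subseteq\mathcal{F}_t$. Since the arrivals of type $i$ are Poisson with rate $\lambda_i$ and, conditional on an arrival at time $t$, the algorithm independently attempts to match $j$ with the prescribed probability, the conditional probability of a successful match of $j$ in $[t,t+dt]$ given $\mathcal{F}_t$ equals $r_j(t)A_j(t)\,dt + o(dt)$: the sum $r_j(t)=\sum_i r_{ij}(t)$ aggregates the attempt rates, and $A_j(t)$ enforces that $j$ is still available (attempts on an already-matched $j$ leave $A_j$ unchanged).

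Taking expectations and passing to the limit gives
\[
\frac{d}{dt}\E[A_j(t)] = -\E[r_j(t)A_j(t)],
\]
and combining with the initial condition $\E[A_j(0)]=1$ and integrating from $0$ to $t$ yields exactly Fact \ref{fact:prob}. Equivalently, writing $N_j(t)=1-A_j(t)$ for the counting process of successful matches on $j$, the quantity $\int_0^t r_j(t')A_j(t')\,dt'$ is its $\mathcal{F}_t$-compensator, so $N_j(t)-\int_0^t r_j(t')A_j(t')\,dt'$ is a zero-mean martingale and taking expectations gives the claim directly.

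The main subtlety is that $r_{ij}(t)$ is itself random and history-dependent, so the Poisson-intensity calculation must be performed at the level of conditional expectations; this is precisely where the $\mathcal{F}_t$-measurability of $r_{ij}(t)$ established above is essential. The bound $r_{ij}(t)\le\lambda_i$ together with the finite number of online types makes all exchanges of expectation and integral routine via Fubini, so no further care is needed.
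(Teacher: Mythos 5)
Your proof is correct and rigorous. The paper states Fact \ref{fact:prob} without any argument, prefacing it (together with Fact \ref{fact:ratio}) only with ``By definition:'', so your filtration/compensator argument is precisely the formal justification the paper leaves implicit---and you correctly isolate the one nontrivial point, namely that each $r_{ij}(t)$ is $\mathcal{F}_t$-measurable (deterministic up to $t_1$, and for $t>t_1$ determined by $A_{j'}(t_1)\in\mathcal{F}_{t_1}$), so that $r_j(t)A_j(t)$ is a legitimate predictable intensity for the one-jump counting process $1-A_j$, after which taking expectations of the compensated martingale gives the identity.
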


Combining Facts \ref{fact:rate}, \ref{fact:rate2} and \ref{fact:prob} we can get the following lemma. Recall that $y_j=\sum_{i\in I_1}x_{ij}$ is the total flow of first-class edges incident to offline vertex $j$.

\begin{lemma}
\label{lem:prob}
	For any offline vertex $j$:
	
	(1) For any time $0\le t\le t_0$, $\E [A_j(t)]=e^{-y_jt}$;
	
	(2) For any time $t_0< t\le t_1$, $\E [A_j(t)]=e^{-y_jt_0-(t-t_0)}$;
	
	(3) For any time $t_1< t\le 1$, $\E [A_j(t)]\ge e^{-y_jt_0-(t_1-t_0)-(2-y_j)(t-t_1)}$.
\end{lemma}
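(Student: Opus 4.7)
The plan is to turn Fact \ref{fact:prob} into an ODE (on stages 1--2) or a differential inequality (on stage 3) for $\E[A_j(t)]$, and to solve it on each of the three intervals. Differentiating the identity in Fact \ref{fact:prob} gives $\frac{d}{dt}\E[A_j(t)] = -\E[r_j(t)\,A_j(t)]$, so everything reduces to controlling $r_j(t)=\sum_i r_{ij}(t)$ on each stage.

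For part (1), by Fact \ref{fact:rate} every second-class edge at $j$ has rate zero on $[0,t_0]$, so $r_j(t)\equiv \sum_{i\in I_1} x_{ij}=y_j$, a deterministic constant. The linear ODE with initial value $\E[A_j(0)]=1$ then gives $e^{-y_j t}$ exactly. For part (2), Fact \ref{fact:rate} again makes $r_j(t)$ deterministic on $(t_0,t_1]$: the first-class contribution is still $y_j$, while the second-class contribution is $\sum_{i\in I_2,\, j\in J_i} x_{ij}= x_j-y_j = 1-y_j$ by Equation \eqref{eqn:xj}. Hence $r_j(t)\equiv 1$, and starting from the value $e^{-y_j t_0}$ obtained in (1), the ODE yields $e^{-y_j t_0 - (t-t_0)}$.

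The interesting case is (3). After $t_1$ the rate $r_{ij}(t)$ for a second-class edge is no longer deterministic: by Fact \ref{fact:rate2} it equals $x_{ij}$ or $2x_{ij}$ according to whether the other neighbor $j'$ of $i$ was unmatched at $t_1$. I would avoid tracking this correlation and use only the pointwise bound $r_{ij}(t)\le 2x_{ij}$, which holds almost surely, to obtain
\[
    r_j(t) \;\le\; y_j + 2\!\!\sum_{i\in I_2,\, j\in J_i} x_{ij} \;=\; y_j + 2(1-y_j) \;=\; 2-y_j
\]
on $(t_1,1]$. Substituting into $\frac{d}{dt}\E[A_j(t)]=-\E[r_j(t)A_j(t)]\ge -(2-y_j)\,\E[A_j(t)]$ and integrating (an integrating-factor argument, equivalent to a one-line Gr\"onwall) gives $\E[A_j(t)]\ge \E[A_j(t_1)]\,e^{-(2-y_j)(t-t_1)}$; combining with the explicit value of $\E[A_j(t_1)]$ from (2) yields the stated lower bound.

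The main subtlety I expect to watch for is the correlation introduced at $t_1$: after $t_1$, the rates into $j$ depend on a joint random event (the matched/unmatched status of neighbors of various second-class types at $t_1$), so $r_j(t)$ cannot be treated as independent of $A_j(t)$. This is precisely why a sure pointwise upper bound on $r_j$, rather than an independence or mean-value argument, is the right tool here; once that bound is in hand, the rest is a standard differential inequality.
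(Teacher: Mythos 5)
Your proof is correct and uses essentially the same argument as the paper: deterministic rates $y_j$ and $1$ on the first two stages give exact exponentials via the ODE from Fact~\ref{fact:prob}, and on the third stage the pointwise almost-sure bound $r_j(t)\le 2-y_j$ yields the differential inequality, which the paper resolves by a comparison-function argument and you by an equivalent Gr\"onwall/integrating-factor step. Your explicit observation about why a sure pointwise bound (rather than independence) is needed after $t_1$ matches the paper's ``for any $A_j(t)$'' qualifier.
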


\begin{proof}
    (1) For $0\le t\le t_0$, $$r_j(t)=\sum_{i\in I_1}x_{ij}(t)+\sum_{i\in I_2}0=y_j$$ independent of $A_j(t)$, so $$\E[A_j(t)]=1-y_j\int_0^t \E[A_j(t')] \mathrm{d}t'=e^{-y_jt}.$$
    
    (2) For $t_0< t\le t_1$, $$r_j(t)=\sum_{i\in I_1}x_{ij}(t)+\sum_{i\in I_2}x_{ij}(t)=1$$ independent of $A_j(t)$, so $$\E[A_j(t)]=\E[A_j(t_0)]-\int_{t_0}^t \E[A_j(t')] \mathrm{d}t'=e^{-y_jt_0-(t-t_0)}.$$
    
    (3) For $t_1< t\le 1$, $$r_j(t)\le\sum_{i\in I_1}x_{ij}(t)+\sum_{i\in I_2}2x_{ij}(t)=y_j+2(1-y_j)=2-y_j$$ for any $A_j(t)$, so $$\E[A_j(t)]\ge\E[A_j(t_1)]-(2-y_j)\int_{t_1}^t \E[A_j(t')] \mathrm{d}t'.$$ Let $$f(t)=\E[A_j(t_1)]-(2-y_j)\int_{t_1}^t f(t') \mathrm{d}t',$$ then $\E[A_j(t_1)]=f(t_1)$ and $$\frac{\mathrm{d}\E[A_j(t)]}{\mathrm{d}t} \Big/ \E[A_j(t)]\ge 2-y_j = \frac{\mathrm{d}f(t)}{\mathrm{d}t} \Big/ f(t)$$ for $t_1<t\le 1$, so $$\E[A_j(t)]\ge f(t)=e^{-y_jt_0-(t_1-t_0)-(2-y_j)(t-t_1)}$$ for $t_1< t\le 1$.
\end{proof}

The match rate of a second-class edge after time $t_1$ is dependent on the status of another offline vertex at time $t_1$, so we also need the following lemma to bound the conditional matched probability.

\begin{lemma}
\label{lem:cond-prob}
	For any offline vertices $j\neq j'$, any $k\in\{0,1\}$ and any time $t_1< t\le 1$, \\ $\E[A_j(t)|A_{j'}(t_1)=k]\ge e^{-y_jt_0-(t_1-t_0)-(2-y_j)(t-t_1)}$.
\end{lemma}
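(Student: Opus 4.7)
The plan is to reduce the conditional bound to the unconditional bound of Lemma~\ref{lem:prob}(3) in two steps. First I will show that $A_j(t_1)$ and $A_{j'}(t_1)$ are actually independent, so that conditioning on $A_{j'}(t_1)=k$ does not change the starting value at time $t_1$. Then I will rerun the Gr\"onwall-type argument of Lemma~\ref{lem:prob}(3) conditionally for $t>t_1$.

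For the independence, the key observation is that before time $t_1$ every arriving online vertex attempts to match at most one offline vertex: a first-class arrival always targets its unique neighbor, a second-class arrival in $[0,t_0]$ is discarded, and a second-class arrival in $(t_0,t_1]$ picks one of its two neighbors uniformly at random, independently across arrivals. Hence for each online type $i$, the sub-processes of type-$i$ arrivals routed to each of its neighbors are obtained from the type-$i$ Poisson process by independent coin flips, so by Poisson splitting they are mutually independent Poisson processes. Superposing over all types, the total ``attempts at $j$'' process on $[0,t_1]$ is independent of the total ``attempts at $j'$'' process on $[0,t_1]$. Since $A_j(t_1)$ is the indicator that no attempt at $j$ occurred in $[0,t_1]$ (the first attempt would necessarily match the still-unmatched $j$), and likewise for $A_{j'}(t_1)$, the two indicators are independent, so by Lemma~\ref{lem:prob}(2),
\[
\E[A_j(t_1)\mid A_{j'}(t_1)=k]=\E[A_j(t_1)]=e^{-y_jt_0-(t_1-t_0)}.
\]

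For $t>t_1$, the match rate $r_j(t)$ may depend on various statuses at time $t_1$, possibly including $A_{j'}(t_1)=k$ when $j'$ happens to be the ``other neighbor'' of some second-class vertex incident to $j$. However, Fact~\ref{fact:rate2} still supplies the deterministic pointwise upper bound $r_j(t)\le y_j+2(1-y_j)=2-y_j$ in every case, so the conditional analogue of Fact~\ref{fact:prob} yields
\[
\E[A_j(t)\mid A_{j'}(t_1)=k]\ge \E[A_j(t_1)\mid A_{j'}(t_1)=k]-(2-y_j)\int_{t_1}^{t}\E[A_j(t')\mid A_{j'}(t_1)=k]\,\mathrm{d}t',
\]
and the same comparison used in Lemma~\ref{lem:prob}(3) delivers the claimed bound.

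The only real obstacle is the independence step: it relies on reading ``match to each unmatched neighbor with probability $\tfrac12$'' as a single coin flip that chooses one neighbor to attempt, so that each arrival contributes to only one attempt stream and Poisson splitting applies cleanly; two independent per-neighbor attempts would instead couple the attempt processes at $j$ and $j'$. Once this is pinned down, the remainder is a routine conditional repetition of already-established arguments, and the conditional bound ends up matching the unconditional one exactly.
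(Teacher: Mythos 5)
Your proof takes the same two-step approach as the paper's: first establish $\E[A_j(t_1)\mid A_{j'}(t_1)=k]=\E[A_j(t_1)]$, then rerun the conditional Gr\"onwall comparison from Lemma~\ref{lem:prob}(3) for $t>t_1$. Your Poisson-splitting justification of the first step is actually more explicit than the paper's one-line appeal to ``$r_j(t)$ is independent of $A_{j'}(t_1)$ for $0\le t\le t_1$,'' and your caveat about reading ``match to each unmatched neighbor with probability $\frac12$'' as a single routing coin (so each arrival feeds exactly one attempt stream) is precisely the reading the paper intends and relies on.
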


\begin{proof}
    The argument is the same as that of Lemma \ref{lem:prob}. Since for $0\le t\le t_1$, $r_j(t)$ is also independent of $A_{j'}(t_1)$, $$\E[A_j(t_1)|A_{j'}(t_1)=k]=\E[A_j(t_1)]=e^{-y_jt_0-(t_1-t_0)}.$$ For $t_1< t\le 1$, given $A_{j'}(t_1)=k$, we still have $r_j(t)\le 2-y_j$, so $$\E[A_j(t)|A_{j'}(t_1)=k]\ge\E[A_j(t_1)|A_{j'}(t_1)=k]-(2-y_j)\int_{t_1}^t \E[A_j(t')|A_{j'}(t_1)=k] \mathrm{d}t',$$ so $$\E[A_j(t)|A_{j'}(t_1)=k]\ge e^{-y_jt_0-(t_1-t_0)-(2-y_j)(t-t_1)}.$$
\end{proof}

Putting things together, we can compute the matched probability for each edge to prove our main theorem:

\begin{theorem}
\label{thm:iid}
	Multistage Suggested Matching is $0.645$-competitive for edge-weighted online stochastic matching.
\end{theorem}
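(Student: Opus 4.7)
The plan is to prove that for every edge $(i,j) \in E$, the matched probability by time $1$ is at least $0.645 \cdot x_{ij}$. Summing $w_{ij}$ times this bound over all edges, and using that $\sum w_{ij} x_{ij}$ upper-bounds the offline optimal via the Jaillet-Lu LP, yields the competitive ratio. The core of the argument is therefore an edge-by-edge analysis with the right choice of the boundary times $t_0, t_1$.

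First I would handle the first-class edges. For a first-class edge $(i,j)$, Fact \ref{fact:rate} gives $r_{ij}(t) = x_{ij}$ throughout $[0,1]$, so by Fact \ref{fact:ratio} the matched probability equals $x_{ij} \int_0^1 \E[A_j(t)]\,\mathrm{d}t$. Splitting the integral at $t_0$ and $t_1$ and plugging in the three bounds from Lemma \ref{lem:prob} gives a closed-form lower bound $x_{ij} \cdot \Phi_1(y_j, t_0, t_1)$, a function of $y_j$ (the total first-class flow at $j$) and the boundary times. Since by Equation \eqref{eqn:yj} we have $y_j \in [0, 1-\ln 2]$, it remains to verify numerically that, for the chosen $t_0, t_1$, $\Phi_1 \ge 0.645$ on this interval; one should also check monotonicity (or a small number of critical points) to reduce this to evaluating $\Phi_1$ at finitely many values of $y_j$.

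Next I would handle second-class edges, which is the more delicate case since the match rate after $t_1$ is correlated with the state of the other neighbor $j'$. For an edge $(i,j)$ with second neighbor $j'$, the matched probability splits as $\int_{t_0}^{t_1} x_{ij}\, \E[A_j(t)]\,\mathrm{d}t + \int_{t_1}^{1} \E[r_{ij}(t)A_j(t)]\,\mathrm{d}t$. For the second integral I would condition on $A_{j'}(t_1) \in \{0,1\}$ and use Fact \ref{fact:rate2} to get
\[
    \E[r_{ij}(t)A_j(t)] = x_{ij}\,\E[A_j(t)\mid A_{j'}(t_1)=1]\,\Pr[A_{j'}(t_1)=1] + 2x_{ij}\,\E[A_j(t)\mid A_{j'}(t_1)=0]\,\Pr[A_{j'}(t_1)=0].
\]
The crucial point is that Lemma \ref{lem:cond-prob} gives the \emph{same} lower bound on the conditional expectation in both cases, so this collapses to $x_{ij}\,e^{-y_jt_0 - (t_1-t_0) - (2-y_j)(t-t_1)}\cdot (2 - \E[A_{j'}(t_1)])$, where $\E[A_{j'}(t_1)]$ comes from Lemma \ref{lem:prob}. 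This yields a closed-form bound $x_{ij} \cdot \Phi_2(y_j, y_{j'}, t_0, t_1)$ depending only on $y_j, y_{j'} \in [0, 1-\ln 2]$.

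The final step is to pick the boundary times $t_0, t_1$ to make $\min\{\Phi_1(y_j,t_0,t_1),\,\Phi_2(y_j,y_{j'},t_0,t_1)\} \ge 0.645$ uniformly over $y_j, y_{j'} \in [0, 1-\ln 2]$. I would start by observing that increasing $t_0$ helps first-class edges (more pure first-class time) but hurts second-class edges (they are discarded longer), so there is a natural balance; similarly $t_1$ trades off the two improvements for second-class edges. The main obstacle I expect is the two-parameter optimization for $\Phi_2$: it is not obviously monotone in $y_j$ or $y_{j'}$, and the worst-case point may lie in the interior of the box $[0, 1-\ln 2]^2$. I would handle this by taking partial derivatives in $y_j$ and $y_{j'}$ to show that $\Phi_2$ is convex (or at least unimodal) in each variable separately, reducing the minimization to a small set of boundary and stationary points, and then verifying numerically that with appropriately chosen $t_0, t_1$ (around the values that equalize the two bounds at the worst case) both $\Phi_1$ and $\Phi_2$ exceed $0.645$.
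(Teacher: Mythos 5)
Your proposal is correct and follows essentially the same approach as the paper's proof: a per-edge analysis using Facts \ref{fact:ratio}--\ref{fact:prob} and Lemmas \ref{lem:prob}--\ref{lem:cond-prob}, conditioning on $A_{j'}(t_1)$ for second-class edges, and then choosing $t_0, t_1$ (the paper takes $t_0=0.05$, $t_1=0.75$) to push both per-edge bounds above $0.645$. The one place you overcomplicate is the two-variable optimization for $\Phi_2$: the $y_{j'}$-dependence enters only through $\E[A_{j'}(t_1)]=e^{-y_{j'}t_0-(t_1-t_0)}$, so $\Phi_2$ is monotone increasing in $y_{j'}$ and the worst case is simply $y_{j'}=0$, after which the paper proves both bounds are decreasing in $y_j$, so the minimum sits at the corner $y_j=1-\ln 2$ rather than at an interior critical point.
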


\begin{proof}
    We define the competitive ratio on each edge $(i,j)$ to be the ratio of its final matched probability to $x_{ij}$. Since the offline optimum can be bounded by $\sum_{(i,j)\in E}w_{ij}x_{ij}$, we only need to prove that the competitive ratio on every edge is at least $0.645$. 

	For any first-class edge $(i,j)$, by Facts \ref{fact:ratio} and \ref{fact:rate}, the matched probability of this edge is
    $$\int_{0}^{1}x_{ij}\E [A_j(t)] \mathrm{d}t.$$
    
    So the competitive ratio on this edge is:
	\begin{align*}
		\int_{0}^{1}\E [A_j(t)] \mathrm{d}t & = \int_{0}^{t_0}\E [A_j(t)] \mathrm{d}t + \int_{t_0}^{t_1}\E [A_j(t)] \mathrm{d}t + \int_{t_1}^{1}\E [A_j(t)] \mathrm{d}t \\
		& \ge \int_{0}^{t_0} e^{-y_jt} \mathrm{d}t + \int_{t_0}^{t_1}e^{-y_jt_0-(t-t_0)} \mathrm{d}t + \int_{t_1}^{1}e^{-y_jt_0-(t_1-t_0)-(2-y_j)(t-t_1)} \mathrm{d}t \tag{Lemma \ref{lem:prob}} \\
		& = \frac{1}{y_j}\big(1-e^{-y_jt_0}\big)+e^{-y_jt_0}\big(1-e^{-(t_1-t_0)}\big)+\frac{1}{2-y_j}e^{-y_jt_0-(t_1-t_0)}\big(1-e^{-(2-y_j)(1-t_1)}\big).
	\end{align*}

	Now consider any second-class edge $(i,j)$. Let $j'$ be the other neighbor of the online vertex type $i$. By Fact \ref{fact:ratio}, the matched probability of this edge is:
	$$\int_0^1 \E[r_{ij}(t)A_j(t)] \mathrm{d}t = \int_0^{t_0}\E[r_{ij}(t)A_j(t)] \mathrm{d}t + \int_{t_0}^{t_1} \E[r_{ij}(t)A_j(t)] \mathrm{d}t + \int_{t_1}^1 \E[r_{ij}(t)A_j(t)] \mathrm{d}t.$$
	
	By Fact \ref{fact:rate}, we have:
	$$\int_0^{t_0}\E[r_{ij}(t)A_j(t)] \mathrm{d}t=0$$ and $$\int_{t_0}^{t_1} \E[r_{ij}(t)A_j(t)] \mathrm{d}t=\int_{t_0}^{t_1} x_{ij}\E[A_j(t)] \mathrm{d}t.$$
	
	By Fact \ref{fact:rate2}, we have:
	$$\int_{t_1}^1 \E[r_{ij}(t)A_j(t)] \mathrm{d}t = \E[A_{j'}(t_1)]\int_{t_1}^1 x_{ij}\E[A_j(t)|A_{j'}(t_1)=1] \mathrm{d}t + (1-\E[A_{j'}(t_1)])\int_{t_1}^1 2x_{ij}\E[A_j(t)|A_{j'}(t_1)=0] \mathrm{d}t.$$
	
	So the competitive ratio on this edge is:
	\begin{align*}
		& \int_{t_0}^{t_1}\E [A_j(t)] \mathrm{d}t + \E [A_{j'}(t_1)] \int_{t_1}^{1}\E[A_j(t)|A_{j'}(t_1)=1] \mathrm{d}t + 2(1-\E [A_{j'}(t_1)]) \int_{t_1}^{1} \E[A_j(t)|A_{j'}(t_1)=0] \mathrm{d}t \\
		\ge & \int_{t_0}^{t_1}e^{-y_jt_0-(t-t_0)} \mathrm{d}t + \big(2-e^{-y_{j'}t_0-(t_1-t_0)}\big) \int_{t_1}^{1}e^{-y_jt_0-(t_1-t_0)-(2-y_j)(t-t_1)} \mathrm{d}t \tag{Lemmas \ref{lem:prob} and \ref{lem:cond-prob}} \\
		\ge & \ e^{-y_jt_0}\big(1-e^{-(t_1-t_0)}\big) + \big(2-e^{-(t_1-t_0)}\big)\frac{1}{2-y_j}e^{-y_jt_0-(t_1-t_0)}\big(1-e^{-(2-y_j)(1-t_1)}\big).
	\end{align*}
    
	Taking $t_0=0.05$ and $t_1=0.75$\footnote{As the ratios are complicated, we enumerate the values of the variables to get a nearly optimal solution. Further adjustment can only improve the ratio on the 4th digit after the decimal point.}, under Eqn. \eqref{eqn:yj}, the minimum values of both ratios are achieved at $y_j=1-\ln2$. The proof is basic but tedious calculus, so we defer it to Appendix \ref{app:iid}. When $y_j=1-\ln2$, both of them are at least $0.645$.
\end{proof}

\section*{Acknowledgments}

We thank the anonymous reviewers for their helpful comments.

\bibliographystyle{plainnat}
\bibliography{ref}

\appendix

\section{Missing Proof in Theorem \ref{thm:iid}}
\label{app:iid}

Define:
\begin{align*}
	a(x) & = \int_{0}^{t_0} e^{-tx} \mathrm{d}t = \frac{1}{x}\big(1-e^{-\frac{x}{20}}\big) , \\
	b(x) & = \int_{t_0}^{t_1}e^{-t_0x-(t-t_0)} \mathrm{d}t = e^{-\frac{x}{20}}\big(1-e^{-\frac{7}{10}}\big) , \\
	c(x) & = \int_{t_1}^{1}e^{-t_0x-(t_1-t_0)-(2-x)(t-t_1)} \mathrm{d}t = \frac{1}{2-x}e^{-\frac{x}{20}-\frac{7}{10}}\big(1-e^{-\frac{2-x}{4}}\big) .
\end{align*}
Then we only need to prove that $a(x)+b(x)+c(x)$ and $b(x)+(2-e^{-\frac{7}{10}})c(x)$ are decreasing in $x\in[0,1-\ln2]$. Apparently, $a(x)$ and $b(x)$ are decreasing, so it suffices to prove:
\[
    -b'(x) > \big(2-e^{-\frac{7}{10}}\big)c'(x),
\]
i.e.:
\[
    \frac{1}{20}e^{-\frac{x}{20}}\big(1-e^{-\frac{7}{10}}\big) > \big(2-e^{-\frac{7}{10}}\big)\frac{1}{2-x}e^{-\frac{x}{20}-\frac{7}{10}}\Big(\big(1-e^{-\frac{2-x}{4}}\big)\big(\frac{1}{2-x}-\frac{1}{20}\big)-\frac{1}{4}e^{-\frac{2-x}{4}}\Big).
\]
After rearranging terms, we get:
\[
    (2-x)\big(e^{\frac{7}{10}}-1\big) > 20\big(2-e^{-\frac{7}{10}}\big)\Big(\big(1-e^{-\frac{2-x}{4}}\big)\big(\frac{1}{2-x}-\frac{1}{20}\big)-\frac{1}{4}e^{-\frac{2-x}{4}}\Big).
\]
The left-hand-side achieves minimum value $1.716\dots$ at $x=1-\ln2$.
When $x\in[0,\frac{1-\ln2}{2}]$, the right-hand-side is no more than:
\[
    20\big(2-e^{-\frac{7}{10}}\big)\Big(\big(1-e^{-\frac{1}{2}}\big)\big(\frac{1}{2-(1-\ln2)/2}-\frac{1}{20}\big)-\frac{1}{4}e^{-\frac{1}{2}}\Big) = 1.256\dots
\]
When $x\in[\frac{1-\ln2}{2},1-\ln2]$, the right-hand-side is no more than:
\[
    20\big(2-e^{-\frac{7}{10}}\big)\Big(\big(1-e^{-\frac{2-(1-\ln2)/2}{4}}\big)\big(\frac{1}{1+\ln2}-\frac{1}{20}\big)-\frac{1}{4}e^{-\frac{2-(1-\ln2)/2}{4}}\Big) = 1.272\dots
\]
So the inequality follows.

\end{document}